\documentclass[10pt,twocolumn]{article}

\usepackage[letterpaper,
  textwidth=6.75in,
  textheight=9.25in,
  centering]{geometry}

\usepackage[numbers,square]{natbib}
\usepackage{amsmath}
\usepackage{amssymb}
\usepackage{amsthm}
\usepackage{algorithm}
\usepackage{algorithmic}
\usepackage{xcolor}
\usepackage{comment}
\usepackage{microtype}
\usepackage[hidelinks]{hyperref}

\newtheorem{theorem}{Theorem}

\newcommand{\dd}{\mathop{}\!\mathrm{d}}

\usepackage{graphicx}

\title{Modular Bayesian Inference for Partially Misspecified Mixtures
with Uncertain Class Labels}

\author{%
Maximilian Autenrieth\textsuperscript{1,2}\thanks{Corresponding authors. \\
E-mail: \href{mailto:ma2216@cam.ac.uk}{ma2216@cam.ac.uk}; \href{mailto:kkarchev@icc.ub.edu}{kkarchev@icc.ub.edu}},
Konstantin Karchev\textsuperscript{3,4}\footnotemark[1],
David A. van Dyk\textsuperscript{2}, 
Kaisey Mandel\textsuperscript{1}, \\
and Roberto Trotta\textsuperscript{2,4}
\\[0.6em]
\small \textsuperscript{1}University of Cambridge \quad
\textsuperscript{2}Imperial College London \quad
\textsuperscript{3}University of Barcelona \quad
\small \textsuperscript{4}SISSA 
}
\date{}

\begin{document}

\maketitle

%

%

\begin{abstract}
Scientific inference often involves observations from multiple latent populations whose class memberships are uncertain and must be inferred from available data. While full Bayesian mixture modelling provides a principled approach to joint uncertainty quantification in this setting, often the model is misspecified or less trustworthy for populations that are not the target of the inference, which can distort posterior inference for the parameters of interest. We propose a modular Bayesian framework that safeguards against such biases. We decouple classification from downstream inference using fixed class probabilities, for example from a classifier trained on a subset of labelled data. We then protect inference on the parameters of interest by preventing the likelihood contributions of less trustworthy populations from influencing them. 
A key benefit of this principled modularization is that it greatly simplifies inference, reducing a fully joint mixture analysis to sampling label allocations, fitting the protected downstream model for each sample, and averaging the resulting posteriors. We establish calibration conditions on these class probabilities under which the proposed procedure preserves the inferential target of the analysis with known labels, 
and demonstrate the method in synthetic experiments and in cosmological analysis of photometric supernova samples. 
\end{abstract} 

\section{Introduction}
In many applications, populations of scientific interest are embedded in a larger mixture of latent classes, some of which may be of lesser interest and/or less reliably modelled. In these settings, class memberships are often uncertain and must be inferred from the data \citep{mclachlan2019finite,fruhwirth2019handbook}. Fully Bayesian mixture models provide a principled framework for joint uncertainty quantification across class assignments and model parameters, but misspecification of the less well understood population(s) can propagate through the mixture, altering the inferred class probabilities and distorting inference of the parameters of scientific interest.

\paragraph{Scientific motivation.}
A prominent example is type Ia supernova (SN Ia) cosmology, where an empirical standardization procedure is used to derive precise distances and constrain the expansion history of the Universe \citep{riess1998observational,perlmutter1999measurements}. This relies on the purity of the sample, which has hitherto been achieved through costly spectroscopic follow-up for type confirmation. However, current and upcoming surveys will deliver vast samples of astronomical transients observed only through broadband photometry with only scarce (\({\sim}1\,\%\)) follow-up \citep{vincenzi2023dark}, so future cosmological analyses must propagate the uncertainty of typing purely from photometry. Moreover, the contaminating non-Ia population comprises heterogeneous transient classes whose rates and brightness evolution are hard to model reliably \citep{vincenzi2023dark,moller2022dark}. Existing analyses commonly combine type ``probabilities'' from machine-learning models \citep[e.g.][]{Moller_2020,Qu_2021} with a simple analytic likelihood for the non-Ia objects \citep{kunz2007bayesian,vincenzi2023dark,abbott2024dark}, so misspecification can propagate into cosmological inference. Recent simulation-based inference (SBI) approaches \citep{karchev2023sicret,karchev2026cigars,boyd2026flowsn} add a further complication, as they require numerous realistic forward simulations of the non-Ia population, which can be extremely demanding.

The underlying problem is not specific to SN Ia cosmology. Stars observed in the same region of the sky may either belong to a stellar cluster of interest or to the surrounding field, with uncertain membership that must be propagated into inference on cluster properties of interest
\citep{jackson2020gaia,jackson2022gaia}.
The surrounding field population is typically much more heterogeneous and considerably harder to model reliably. 
Similar problems arise in galaxy-cluster cosmology \citep{wojtak2018galaxy,esteves2025copacabana}.  
In gravitational-wave astronomy, candidate events may be astrophysical signals or detector artefacts that are difficult to characterize, with scientific interest focused on the astrophysical population \citep{heinzel2023inferring}.   
In ecology, uncertain identification of similar species can propagate into inference on species-specific abundance and population dynamics
\citep{chambert2016estimating}. 
In seismology, uncertain classification of earthquakes into different tectonic regimes can affect inference on aftershock and seismicity parameters \citep{schneider2025bayesian}. 
In single-cell genomics, uncertain cell-type assignments can affect cell-type-specific inference, particularly when some populations are less well characterized \citep{zhang2019probabilistic,xu2021probabilistic,theunissen2024uncertainty}.
These settings motivate methods that propagate classification uncertainty while limiting sensitivity to less reliable component models.

\paragraph{Related literature:}
A range of Bayesian approaches have been developed to reduce sensitivity to model misspecification \citep{nott2023bayesian}. Generalized Bayesian inference replaces the usual log-likelihood contribution by a loss chosen for the inferential target, allowing  
parameters of interest to be updated without specifying and trusting a complete generative model 
\citep{bissiri2016general}. 
Restricted-likelihood methods retain a probabilistic model but update the prior using the sampling distribution of carefully chosen summary statistics, such as robust estimators, thereby reducing sensitivity to aspects of the data that the model describes poorly \citep{lewis2021bayesian}. Robust mixture models instead modify the component distributions themselves, e.g., by replacing Gaussian components with heavy-tailed Student-\(t\) distributions so that atypical observations have less
influence on the inferred mixture structure \citep{svensen2005robust}. 

When misspecification is localized to particular parts of a joint model, modular Bayesian inference provides an alternative to modifying the model globally by restricting the flow of information between model components \citep{bayarri2009modular,liu2025general}. The basic motivation is to keep a ``good'' or \emph{trusted} module separate from a \emph{suspect} module so that the latter does not contaminate inference in the former \citep{bayarri2009modular}. However, the notion of a ``module'' is not associated with a single universally adopted formal definition \citep{nott2023bayesian}. \citet{liu2025general} provide a graph-based formalization in which modules are constructed from a partition of the observable variables and the graphical structure of the joint model. In their framework, a self-contained Bayesian module includes the variables required for a well-defined Bayesian analysis for a specified set of observables without requiring information external to the module. Modules may overlap through shared parameters or latent quantities.

A principal form of modular inference is cut inference, which protects a trusted module by removing feedback from a suspect module. Information and uncertainty inferred in the trusted module may still be propagated forward when analysing the suspect module, while information from the suspect module is prevented from updating the protected quantities
\citep{plummer2015cuts,jacob2017better,liu2025general}. 
Modular Bayesian inference has received considerable recent methodological attention. Semi-modular methods allow feedback to be attenuated rather than fully removed \citep{carmona2020semi}, while variational approaches provide alternative computational strategies for cut inference \citep{yu2023variational}. 
Recent work has also extended modular inference to likelihood-free and generalized-Bayesian settings \citep{chakraborty2023modularized,frazier2025cutting}, and studied robustness, uncertainty quantification, and efficiency under misspecification \citep{smith2025cutting,frazier2026posterior}.

\paragraph{Contributions.} 
Our work extends the modular Bayesian framework to mixture models with uncertain class memberships and potentially misspecified component models. We introduce two complementary interventions. Intervention I replaces the model-based label posterior by a fixed external allocation distribution, such as probabilistic classifier outputs, preventing feedback from the downstream model into classification while propagating classification uncertainty. Intervention II removes suspect-component likelihood contributions from the downstream analysis, protecting parameters of interest from misspecification in those components. Together, they give a practical inference procedure that samples label allocations from the fixed allocation distribution, performs standard Bayesian inference with the protected downstream model conditional on each allocation, and averages the resulting posteriors.

We establish theoretical guarantees for the resulting modular procedure. We show that, under an explicit calibration condition on the adopted class probabilities, the classifier-based posterior has the same asymptotic target as the complete-label posterior. When the retained components are correctly specified, the protected analysis recovers the true parameters of interest.

We further study how the information used for classification affects downstream inference, illustrating that additional observables that improve classification can also increase the posterior \emph{efficiency}. However, we demonstrate that predictive classification performance alone does not ensure \emph{reliable} downstream inference. Finally, we demonstrate the framework and theoretical findings in a controlled synthetic example and apply the methodology in a scientific case study of photometric supernova cosmology.

\section{Method}

\subsection{Notation and problem setup}\label{sec:notation_and_fullyBayes}

We consider observations arising from \(K\) latent populations. For 
\(i=1,\ldots,n\), let
\(\pmb D_i=(\pmb X_i,\pmb Y_i)\) denote the observed data and 
\(L_i\in\{1,\ldots,K\}\) the corresponding component label. 
We distinguish between \(\pmb Y_i\), 
the data used for downstream inference on the scientific parameters of interest, 
and \(\pmb X_i\), additional information about class (component) membership. 
In general, \(\pmb X_i\) and \(\pmb Y_i\) may be dependent within components, and inference on class membership may use one or both. 
Labels may be observed for a subset of objects and unknown for the remainder. 
Bold symbols denote collections, e.g.,
\(\pmb D=(\pmb D_1,\ldots,\pmb D_n)\).

A fully Bayesian analysis specifies a joint model for the labels and all observed data. In a standard mixture formulation, \(L_i\mid\pmb\pi\sim\operatorname{Categorical}(\pmb\pi)\), with \(\pmb\pi\) denoting the a-priori class proportions, and
\[
    p(\pmb D_i,L_i=k\mid\pmb\theta)
    =
    \pi_k\,
    p_k(\pmb Y_i\mid\pmb\psi,\pmb\gamma_k)\,
    h_k(\pmb X_i\mid\pmb Y_i,\pmb\phi_k),
\]
where \(p_k\) is the downstream sampling distribution in component \(k\), and \(h_k\) models the additional information about class membership carried by
\(\pmb X_i\).
%
We assume the observations are conditionally independent across \(i\) given the latent labels and model parameters. 
Here \(\pmb\psi\) denotes parameters shared across components, and \(\pmb\phi=(\pmb\phi_1,\ldots,\pmb\phi_K)\) and \(\pmb\gamma=(\pmb\gamma_1,\ldots,\pmb\gamma_K)\) are the component-specific downstream parameters that govern the generative mixture models for \(\pmb X\) and \(\pmb Y\) respectively. We write \(\pmb\theta=(\pmb\psi,\pmb\gamma,\pmb\pi,\pmb\phi)\) for the full parameter vector. The full joint model is represented graphically in the left panel of Fig.~\ref{fig:graph}. 

Conditional on the labels, \(\pmb X\) contains no further information about
\((\pmb\psi,\pmb\gamma)\), since its distribution does not depend on them, so that
\(p(\pmb\psi,\pmb\gamma\mid\pmb D,\pmb\ell)
= p(\pmb\psi,\pmb\gamma\mid\pmb Y,\pmb\ell)\). 
With independent priors for \(\pmb\psi\), 
\(\pmb\gamma\), \(\pmb\pi\), and \(\pmb\phi\), the downstream posterior
can be written by marginalising over the unknown labels as
\begin{equation}\label{eq:full-posterior-decomposition}
    p(\pmb\psi,\pmb\gamma\mid\pmb D)
    = \sum_{\pmb\ell}
    p(\pmb\psi,\pmb\gamma\mid\pmb Y,\pmb\ell)\,
    p(\pmb\ell\mid\pmb D),
\end{equation}
where the sum is over allocations of the unknown labels: \(\pmb\ell \in \{1,\ldots,K\}^{\otimes n}\), \(p(\pmb\ell\mid\pmb D)\) is their marginal posterior probability, and \(p(\pmb\psi,\pmb\gamma\mid\pmb Y,\pmb\ell)\) is the complete-label downstream posterior. 
%
Using Bayes' theorem to express \(p(\pmb\psi,\pmb\gamma\mid\pmb Y,\pmb\ell)\) in terms of \(p(\pmb Y\mid\pmb\psi,\pmb\gamma,\pmb\ell)\), 
\eqref{eq:full-posterior-decomposition} can be rewritten as 
\begin{multline*}
p(\pmb\psi,\pmb\gamma\mid\pmb D) = \sum_{\pmb\ell}
    \frac{p(\pmb Y\mid \pmb\psi,\pmb\gamma,\pmb\ell) p(\pmb\psi,\pmb\gamma\mid\pmb\ell)}{p(\pmb Y \mid \pmb\ell)}
    \, p(\pmb\ell\mid\pmb D)
    \\ = \sum_{\pmb\ell} \left[
    \prod_{i} p_{l_i}(\pmb Y_i\mid\pmb\psi, \pmb\gamma_{l_i})
    \right] \frac{p(\pmb\psi,\pmb\gamma\mid\pmb\ell)}{p(\pmb Y \mid \pmb\ell)}
    \, p(\pmb\ell\mid\pmb D).
\end{multline*}

\begin{figure}[h!]\centering
    \resizebox{0.9\linewidth}{!}{%
        \includegraphics{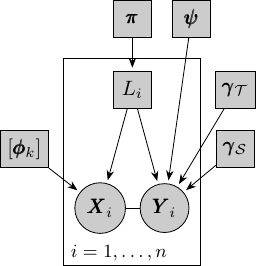}%
        \hspace{1em}%
        \includegraphics{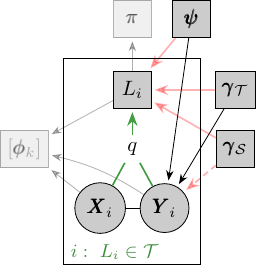}%
    }%
    \caption{\it Graphical representations (where circles denote observed data and squares unknown parameters) of the full joint model (left) and our re-formulation for modular inference with potentially mis-specified mixture components (right). The latter results from inverting the edges from \(L_i\) to \(\pmb X_i\) and \(\pmb Y_i\) and performing two modifications: restricting the influence (red solid edges) of \(\pmb\psi\) and \(\pmb\gamma\) on \(L_i\), sampling it instead from a fixed distribution \(q\) conditioned on the data; and removing suspect components \(\mathcal S\) both from the model (red dashed edge \(\pmb\gamma_{\mathcal S} \to \pmb Y_i\)) and the data (keeping only those objects for which the sampled \(L_i\) is a trusted component). The faded parameters \([\pmb\phi_k]\) and \(\pmb\pi\) are deemed not of scientific interest but can be easily inferred, conditional on sampled \(\pmb L_i\), similarly to \(\pmb\psi\), \(\pmb\gamma\).\label{fig:graph}}
\end{figure}

We focus on settings in which \(p_k(\pmb Y_i\mid\pmb\psi, \pmb\gamma_k)\) is not trusted for one or more components. Let
\(\mathcal S\subset\{1,\ldots,K\}\) denote these suspect components and
\(\mathcal T=\{1,\ldots,K\}\setminus\mathcal S\) the trusted components.
The downstream likelihoods \(p_k\) of the suspect components may be misspecified,
while those of the trusted components are considered reliable for
inference
on \(\pmb\psi\) and the trusted-component-specific parameters \(\pmb\gamma_k\).

Misspecification can affect the analysis via two distinct paths. First, the conditional class probabilities  
\begin{multline}\label{eq:label-post-conditional}
    P(L_i=k\mid\pmb X_i,\pmb Y_i,\pmb\theta)
    \\ =
    \frac{
        \pi_k
        p_k(\pmb Y_i\mid\pmb\psi,\pmb\gamma_k)
        h_k(\pmb X_i\mid\pmb Y_i,\pmb\phi_k)
    }{
        \sum_j
        \pi_j
        p_j(\pmb Y_i\mid\pmb\psi,\pmb\gamma_j)
        h_j(\pmb X_i\mid\pmb Y_i,\pmb\phi_j)
    }
\end{multline}
are affected by \emph{all} component likelihoods through the denominator.
Consequently, misspecification of a suspect component \(j\in\mathcal S\) can distort the conditional allocation probabilities for all \(k\), and this distortion propagates to the posterior allocation distribution
\begin{equation}\label{eq:p(L|D)}
    p(\pmb L=\pmb\ell\mid\pmb D) =
    \int p(\pmb L=\pmb\ell\mid\pmb D,\pmb\theta) \, p(\pmb\theta\mid\pmb D)\,\dd\pmb\theta.
\end{equation}
Moreover, the generative model \(h_k(\pmb X\mid\pmb Y,\pmb\phi_k)\), required for making use of \(X\), may be difficult to specify reliably. Finally, the suspect-component likelihoods contribute directly to inference on \(\pmb\psi\) and can thus distort it even if the class labels were perfectly well known.

\subsection{Modular construction}

To prevent misspecification in the suspect components from propagating into inference on the parameters of interest, we introduce two modular interventions. The first replaces the full-model allocation probabilities by fixed external probabilities, e.g., learned directly from labelled data, preventing the (possibly mis-specified) downstream model from updating them. The second removes the suspect-component likelihoods from inference on the shared and trusted component parameters. The resulting modular construction is shown in the right panel of Fig.~\ref{fig:graph}. 

\paragraph{Fixed allocation model.} 
Because \(p(\pmb L\mid\pmb D,\pmb\theta)\) generally depends on \(\pmb\theta\), which is inferred with the full model, potentially misspecified downstream component likelihoods can influence the inferred labels.

To motivate an alternative, consider the ideal setting in which the joint model is correctly specified. Under standard posterior consistency
conditions, \(p(\pmb\theta\mid\pmb D)\) concentrates at the true parameter
\(\pmb\theta_0\) as the sample size increases, so that the marginal allocation posterior from \eqref{eq:p(L|D)} approaches 
\begin{equation}\label{eq:label-post-factorisation}
p(\pmb L\mid\pmb D,\pmb\theta=\pmb\theta_0)
=
\prod_i p(L_i\mid\pmb D_i,\pmb\theta=\pmb\theta_0). 
\end{equation}
The conditional class probabilities on the right can be estimated directly from representative labelled observations \((\pmb X_j,\pmb Y_j,L_j)\) from the true data-generating process \(P_0\) using a flexible possibly high-dimensional probabilistic classifier, without relying on an explicit mixture model. This bypasses possible mis-specifications of \(p_k(\pmb Y\mid\pmb\psi,\pmb\gamma_k)\) and/or
\(h_k(\pmb X\mid\pmb Y,\pmb\phi_k)\) while exploiting information in both \(X\) and \(Y\).
Alternatively, since our construction is agnostic to the provenance of the fixed probabilistic allocations, unsupervised probabilistic clustering can be used when labelled data are not available.

While this argument motivates a classifier-based replacement, it is stronger than required for reliable inference. Our theoretical results below show that we do not need to recover \(P_0(L\mid\pmb X,\pmb Y)\) pointwise. A weaker downstream-conditional calibration condition is sufficient to preserve the target of the corresponding true-label analysis.

Now, let \(q_k(\pmb X_i,\pmb Y_i)\), \(k=1,\ldots,K\), denote the predicted probability that observation \(i\) belongs to component \(k\).
For observations with known labels, we set
\(q_k(\pmb X_i,\pmb Y_i)=\mathbf 1\{L_i=k\}\).
In accordance with \eqref{eq:label-post-factorisation}, the classifier-based allocation probability is
\begin{equation}
    q(\pmb\ell\mid\pmb D)
    =\prod_{i=1}^n q_{\ell_i}(\pmb X_i,\pmb Y_i).
\end{equation}
Our first intervention consists in swapping \(p(\pmb\ell\mid\pmb D)\) for \(q(\pmb\ell\mid\pmb D)\) in \eqref{eq:full-posterior-decomposition}, defining the \emph{classifier-based posterior}
\begin{equation}
    \widetilde p_q(\pmb\psi,\pmb\gamma\mid\pmb D)
    =
    \sum_{\pmb\ell}
    p(\pmb\psi,\pmb\gamma\mid\pmb Y,\pmb\ell)\,
    q(\pmb\ell\mid\pmb D).
\end{equation}

Note that \(q(\pmb\ell\mid\pmb D)\), like \(p(\pmb\ell\mid\pmb D)\), is fixed during downstream inference (e.g., Monte Carlo posterior sampling) of \( (\pmb\psi,\pmb\gamma) \), while full joint inference requires \(p(\pmb\ell\mid\pmb D, \pmb\theta)\) in \eqref{eq:label-post-conditional} to be continuously updated. Fixing \(q\) thus also prevents feedback from the allocation probabilities to the downstream inference.

\paragraph{Protected downstream inference.}
The second intervention further protects the shared and trusted-component parameters from misspecification in the suspect components. Let
\(\pmb\gamma_{\mathcal T}
=\{\pmb\gamma_k:k\in\mathcal T\}\).
For a fixed allocation \(\pmb\ell\), we define the protected complete-label posterior by deliberately omitting the suspect component likelihoods 
\begin{equation}
    p_{\mathcal T}
    (\pmb\psi,\pmb\gamma_{\mathcal T}\mid\pmb Y,\pmb\ell)
    \propto
    p(\pmb\psi,\pmb\gamma_{\mathcal T})
    \prod_{i:\ell_i\in\mathcal T}
    p_{\ell_i}
    (\pmb Y_i\mid\pmb\psi,\pmb\gamma_{\ell_i}).
\end{equation} 
Conditional on the labels, this is the ordinary posterior based on the
observations allocated to the trusted components. If the trusted-component
likelihoods are correctly specified, 
Bayesian inference for \((\pmb\psi,\pmb\gamma_{\mathcal T})\) under \(p_{\mathcal T}\) is consistent for \((\pmb\psi_0,\pmb\gamma_{\mathcal T,0})\) under the usual regularity conditions, irrespective of misspecification of the suspect components. 
If the suspect components are also correctly specified, the corresponding full and protected posteriors have the same asymptotic target, although the protected analysis may be less efficient. 
This intervention thus trades potential constraining power for reliability.

\paragraph{Combined modular posterior and computation.}
Combining the two interventions gives the protected modular posterior
\begin{align}
    \widetilde p_{q,\mathcal T}
    (\pmb\psi,\pmb\gamma_{\mathcal T}\mid\pmb D)
    &=
    \sum_{\pmb\ell}
    p_{\mathcal T}
    (\pmb\psi,\pmb\gamma_{\mathcal T}\mid\pmb Y,\pmb\ell)\,
    q(\pmb\ell\mid\pmb D).
    \label{eq:modular-posterior}
\end{align}
Thus, uncertainty in the unknown labels is propagated by averaging the protected complete-label posteriors with respect to the fixed allocation distribution. The sum, which in general has \(k^n\) terms of varying significance (according to \(q(\pmb\ell\mid\pmb D)\)), can be approximated directly by Monte Carlo:
\begin{align}\label{eq:iykyk}
    \widetilde p_{q,\mathcal T}
    (\pmb\psi,\pmb\gamma_{\mathcal T}\mid\pmb D)
    & \approx
    \frac{1}{M}
    \sum_{m=1}^M
    p_{\mathcal T}
    (\pmb\psi,\pmb\gamma_{\mathcal T}
     \mid\pmb Y,\pmb L^{(m)})
    \\\nonumber
    \text{with}\quad
    \pmb L^{(m)} &\sim q(\pmb L\mid\pmb D)
    \quad\text{for } m=1,\ldots,M.
\end{align}

Algorithm~\ref{algo:procedure} summarizes the practical workflow this implies: sample label allocations from \(q\), fit the protected downstream model for each allocation, and average the corresponding posteriors. Importantly, each allocation receives equal weight in \eqref{eq:iykyk}, which means that the protected complete-label posteriors need to be properly normalized. If the individual posteriors are represented through Monte Carlo samples, this corresponds to combining an equal number of (effectively independent) draws from each.

\begin{algorithm}[t]
\caption{Computing the modular posterior}
\begin{algorithmic}[1]
\REQUIRE Data \(\pmb D=(\pmb X,\pmb Y)\), fixed allocation probabilities
\(q(\pmb L\mid\pmb D)\), trusted components \(\mathcal T\), number of
allocations \(M\).
\FOR{\(m=1,\ldots,M\)}
    \STATE Allocate labels via
    \(\pmb L^{(m)}\sim q(\pmb L\mid\pmb D)\). 
    \STATE Form trusted subset
    \(\pmb Y_{\mathcal T}^{(m)}
    =\{\pmb Y_i:L_i^{(m)}\in\mathcal T\}\).
    \STATE 
    Obtain  
    \(
    p_{\mathcal T}
    (\pmb\psi,\pmb\gamma_{\mathcal T}
    \mid\pmb Y^{(m)}_{\mathcal T},\pmb L^{(m)}).
    \)
\ENDFOR
\STATE Average the \(M\) posteriors with equal weight \(1/M\). 
\end{algorithmic}
\label{algo:procedure}
\end{algorithm}

\subsection{Theoretical guarantees}\label{sec:theory}

%
Let \(P_0\) denote the joint distribution of the data and labels under the true data-generating model, and let \(\mathbb E_0\) denote expectation with respect to \(P_0\). 
\begin{theorem}[Consistency under conditional calibration]
Let \(q=(q_1,\ldots,q_K)\) be a fixed probabilistic classifier satisfying
\begin{align}\label{eq:calibration_condition}
    \mathbb E_0
    \!\left[q_k(\pmb X,\pmb Y)\mid\pmb Y\right]
    =
    P_0(L=k\mid\pmb Y)
\end{align}
for \(k=1,\ldots,K\). We refer to \eqref{eq:calibration_condition} as conditional calibration with respect to \(\pmb Y\). 
Let 
\(L_i^\star\mid\pmb X_i,\pmb Y_i
\sim q(\cdot\mid\pmb X_i,\pmb Y_i)\)
independently across \(i\), 
and suppose the complete-label posterior
\(p(\pmb\psi,\pmb\gamma\mid\pmb Y,\pmb L)\) is consistent for a fixed target 
\((\pmb\psi^\dagger, \pmb\gamma^\dagger)\).
Then 
\[
    (\pmb Y,\pmb L^\star)\overset d=(\pmb Y,\pmb L),
\]
and the classifier-based posterior
\(\widetilde p_q(\pmb\psi,\pmb\gamma\mid\pmb D)\) is consistent for the same target \((\pmb\psi^\dagger, \pmb\gamma^\dagger)\).

Moreover, whenever posterior means are integrable,
\begin{align}
\mathbb E_0\!\left[
    \mathbb E_{\widetilde p_q}\!\left\{(\pmb\psi,\pmb\gamma)\mid\pmb D\right\}
\right]
=
\mathbb E_0\!\left[
    \mathbb E\!\left\{(\pmb\psi,\pmb\gamma)\mid\pmb Y,\pmb L\right\}
\right],
\label{eq:thm1_expectation}
\end{align} 
i.e., the classifier-based posterior mean has the same expectation under \(P_0\) as the complete-label posterior mean that would be obtained if the true labels were observed, 
and therefore the same finite-sample bias   
relative to \((\pmb\psi^\dagger,\pmb\gamma^\dagger)\).
\label{thm:conditional-calibration}
\end{theorem}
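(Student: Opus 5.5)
The plan is to work first with a single observation and then extend to all $n$ using independence across $i$. Under $P_0$ the pairs $(\pmb X_i,\pmb Y_i,L_i)$ are i.i.d. The auxiliary labels $L_i^\star$ are drawn independently given $\pmb D_i$ from $q(\cdot\mid\pmb X_i,\pmb Y_i)$. For one observation, I would compute the joint law of $(\pmb Y,L^\star)$ by conditioning on $\pmb Y$ and applying the tower property:
\[
P(L^\star=k\mid\pmb Y)=\mathbb E_0\!\left[P(L^\star=k\mid\pmb X,\pmb Y)\mid\pmb Y\right]=\mathbb E_0\!\left[q_k(\pmb X,\pmb Y)\mid\pmb Y\right]=P_0(L=k\mid\pmb Y).
\]
The last equality is the calibration condition \eqref{eq:calibration_condition}. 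Since $\pmb Y$ has the same marginal law in both cases, the conditional law of the label given $\pmb Y$ agrees, and so $(\pmb Y,L^\star)\overset d=(\pmb Y,L)$ for one observation. The pairs $(\pmb Y_i,L_i^\star)$ are independent across $i$, because each $L_i^\star$ depends only on $\pmb D_i$ plus independent randomisation. Independence then upgrades the single-observation statement to the full vectors $(\pmb Y,\pmb L^\star)\overset d=(\pmb Y,\pmb L)$. Objects with known labels have $q_k=\mathbf 1\{L_i=k\}$, so $L_i^\star=L_i$ for them and they fit the same argument trivially.

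Next I would express the classifier-based posterior as a mixture over the auxiliary labels:
\[
\widetilde p_q(\pmb\psi,\pmb\gamma\mid\pmb D)=\mathbb E\!\left[p(\pmb\psi,\pmb\gamma\mid\pmb Y,\pmb L^\star)\mid\pmb D\right].
\]
Here the expectation is over $\pmb L^\star\sim q(\cdot\mid\pmb D)$. This holds by the definition of $q(\pmb\ell\mid\pmb D)$ as a product.

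For consistency, fix a neighbourhood $U$ of $(\pmb\psi^\dagger,\pmb\gamma^\dagger)$ and set $g(\pmb Y,\pmb\ell)=\Pi(U^c\mid\pmb Y,\pmb\ell)\in[0,1]$. By the equality in distribution, $g(\pmb Y,\pmb L^\star)\overset d=g(\pmb Y,\pmb L)$, and the latter tends to $0$ in probability by the assumed consistency of the complete-label posterior. Then $\widetilde\Pi_q(U^c\mid\pmb D)=\mathbb E[g(\pmb Y,\pmb L^\star)\mid\pmb D]$ is a nonnegative variable whose unconditional mean is $\mathbb E_0[g(\pmb Y,\pmb L)]$. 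That mean tends to $0$ by bounded convergence, and Markov's inequality gives $\widetilde\Pi_q(U^c\mid\pmb D)\to0$ in $P_0$-probability.

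The main obstacle is the mode of consistency. If "consistent" means almost surely, convergence in distribution of $g$ does not transfer pathwise, because $\pmb L^\star$ and $\pmb L$ live on different sample paths. I would therefore state the result in probability, which is the natural mode here. Alternatively, one could construct a coupling through a Skorokhod-type argument, but I would not pursue that.

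Finally, for \eqref{eq:thm1_expectation}, I would use linearity to write $\mathbb E_{\widetilde p_q}\{(\pmb\psi,\pmb\gamma)\mid\pmb D\}=\mathbb E[m(\pmb Y,\pmb L^\star)\mid\pmb D]$, where $m(\pmb Y,\pmb\ell)$ is the complete-label posterior mean. Taking $\mathbb E_0$, applying the tower property, and using the equality in distribution of $(\pmb Y,\pmb L^\star)$ and $(\pmb Y,\pmb L)$ gives $\mathbb E_0[m(\pmb Y,\pmb L)]$. Integrability justifies the Fubini interchange. Subtracting $(\pmb\psi^\dagger,\pmb\gamma^\dagger)$ from both sides yields the equality of finite-sample biases.
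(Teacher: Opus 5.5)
Your proposal is correct and follows essentially the same route as the paper's proof. Both use the tower-property computation of $P(L_i^\star=k\mid\pmb Y_i)$ with independence across $i$ for the distributional identity, then $\mathbb E_0[\Pi(U^c\mid\pmb Y,\pmb L)]\to0$ by boundedness plus Markov's inequality for consistency in $P_0$-probability (the mode the paper also proves), and iterated expectations for the posterior-mean identity.
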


A proof is given in \ref{sec:proof}. The theorem provides a direct link between classifier calibration and valid downstream inference. Under conditional calibration with respect to \(\pmb Y\), stochastic labels drawn from \(q\) preserve the joint distribution of the downstream data and true labels, \((\pmb Y,\pmb L^\star)\overset d=(\pmb Y,\pmb L)\).
Consequently, the classifier-based posterior shares the asymptotic target of the corresponding true-label analysis and has the same finite-sample expected posterior mean, and hence the same bias relative to any fixed target.
The theorem concerns the classifier-based posterior obtained by averaging over the full allocation distribution \(q(\pmb L\mid\pmb D)\). In practice, this average is approximated using \(M\) sampled allocations. 

Combining Theorem~\ref{thm:conditional-calibration} with the protected
complete-label analysis gives the corresponding guarantee for the full
modular procedure. If the trusted-component likelihoods are correctly
specified and satisfy the usual conditions for posterior consistency, and
\(q\) satisfies \eqref{eq:calibration_condition}, then
\(\widetilde p_{q,\mathcal T}\) is consistent for
\((\pmb\psi_0,\pmb\gamma_{\mathcal T,0})\).
Moreover, whenever the corresponding posterior means are integrable,
\eqref{eq:thm1_expectation} also applies to the protected analysis, so that
the classifier-based protected posterior mean has the same expectation
under \(P_0\) as the corresponding complete-label protected posterior mean,
and therefore the same finite-sample bias. 

\paragraph{Implications for classifier choice and validation.}
Both oracle classifiers
\(q^{XY}_{0,k}(\pmb X,\pmb Y) =P_0(L=k\mid\pmb X,\pmb Y)\) and \(q^{Y}_{0,k}(\pmb Y)=P_0(L=k\mid\pmb Y)\)
satisfy \eqref{eq:calibration_condition}. For \(q^{XY}_0\), this follows from iterated expectations, 
while for \(q^Y_0\) it is immediate. In contrast,
\(q^{X}_{0,k}(\pmb X)=P_0(L=k\mid\pmb X)\) need not satisfy
\eqref{eq:calibration_condition} and therefore need not preserve the joint distribution of \(\pmb Y\) and the latent class.

Interestingly, Theorem~\ref{thm:conditional-calibration} shows that pointwise recovery of the true class probabilities is not required. Classifier errors or scatter are allowed, provided that they average out conditional on \(\pmb Y\), as in \eqref{eq:calibration_condition}. This yields a practical validation criterion when representative labelled data (held out or cross-fitted when those data are used for training) are available. After evaluating \(q_k(\pmb X_i,\pmb Y_i)\) on them, one can assess whether the residuals
\(
    \mathbf 1\{L_i=k\}
    -
    q_k(\pmb X_i,\pmb Y_i)
\)
have mean zero conditional on \(\pmb Y_i\), for example using bins or
smooth estimates of the conditional mean. 
The classifier may also be trained using external data or simulations, provided that the fitted \(q\) satisfies \eqref{eq:calibration_condition} under the target data-generating distribution.
Agreement with this condition provides direct evidence that the classifier is suitable for the downstream
analysis.

\begin{figure*}[t]
    \centering
    \includegraphics[width=0.99\textwidth]{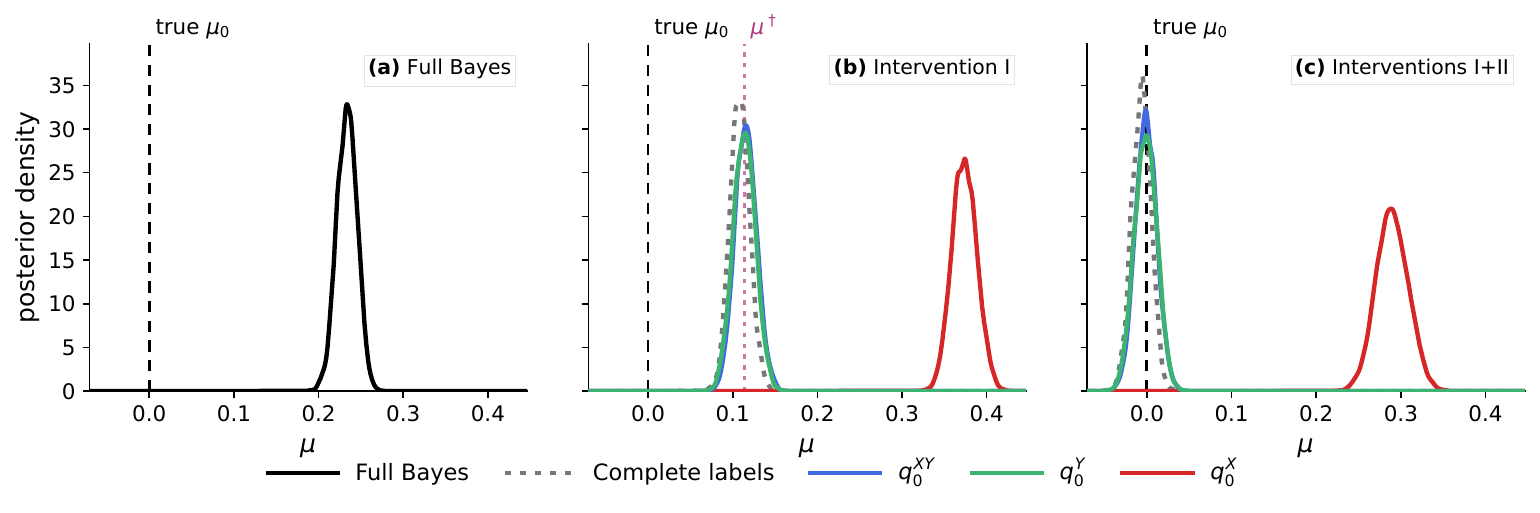}
    \caption{\it Posterior densities for the shared mean \(\mu\). (a) Full Bayes, (b) Intervention I with fixed classifier probabilities, and (c) Interventions I+II with the suspect component excluded. The dashed black line marks the true \(\mu_0\); the magenta dotted line marks \(\mu^\dagger\), the asymptotic target of the complete-label posterior under the full downstream model.\label{fig:shared_mean_post}}
\end{figure*}

\section{Numerical illustration}\label{sec:toy_example}

\paragraph{Simulation setup.} We consider a simple three-component mixture to illustrate how localized
component misspecification can distort inference on a shared parameter of
interest and how the proposed modular construction protects against this
contamination. We generate the labels and auxiliary data according to 
\[
L_i\sim\operatorname{Categorical}(\pmb\pi),
\qquad
X_i\mid L_i=k\sim N(a_k,\tau_k^2). 
\] 
The downstream quantity of interest is a mean parameter \(\mu\) shared
across all three components. Components \(1\) and \(3\) are trusted and generated as 
\[
Y_i\mid L_i=k
\sim N(\mu_0,\sigma_k^2),
\qquad k\in\{1,3\}.
\]
Component \(2\) is treated as suspect. We generate
\[
Y_i
=
\mu_0+\sigma_2 Z_i+\delta E_i,
\qquad
Z_i\sim N(0,1),\quad
E_i\sim\operatorname{Exp}(1),
\]
independently, so that \(\delta\) controls the degree of misspecification.
For inference, we assume the Gaussian downstream model 
\(Y_i\mid L_i=k,\mu,\sigma_k\sim N(\mu,\sigma_k^2)\),
\(k=1,2,3\).

For this illustration, we set
\(n=5{,}000\),
\(\pmb\pi=(0.3,0.4,0.3)\),
\(\pmb a=(-0.6,0,0.8)\),
\(\tau_k=0.5\), \(\mu_0=0\),
\((\sigma_1,\sigma_2,\sigma_3)=(0.8,0.4,0.5)\), 
and \(\delta=1\). Ten percent of the class labels are revealed and treated as known.

\paragraph{Posterior targets under misspecification.}
We first fit the fully joint Bayesian mixture model described in Section~\ref{sec:notation_and_fullyBayes}, in which the latent labels and downstream parameters are inferred jointly. 
Panel (a) of Fig.~\ref{fig:shared_mean_post} shows that the resulting posterior for \(\mu\) is substantially shifted away from the true value \(\mu_0\),  showing how misspecification in component \(2\) can distort inference on the shared parameter.

Panel (b) applies Intervention I by replacing the model-based label posterior with the fixed oracle classification probabilities \(q_0^X\), \(q_0^Y\), and \(q_0^{XY}\). 
For comparison, we also include the complete-label benchmark, assuming all labels are observed, to remove classification uncertainty. Relative to the fully Bayesian analysis, the complete-label posterior is substantially less shifted from \(\mu_0\), but still concentrates at the misspecified target \(\mu^\dagger\neq\mu_0\), derived analytically in the Supplement.
Consistently with the result in Section~\ref{sec:theory}, \(q_0^Y\) and \(q_0^{XY}\), which satisfy \eqref{eq:calibration_condition}, recover the same complete-label target, whereas \(q_0^X\) does not.

Panel (c) combines Interventions I and II by additionally excluding the suspect component from the downstream analysis. 
As shown in Section~\ref{sec:theory}, because the retained components are correctly specified, the protected complete-label posterior targets \(\mu_0\), and the protected analyses based on \(q_0^Y\) and \(q_0^{XY}\) recover the same target, whereas \(q_0^X\) remains shifted.

\paragraph{Classification properties and efficiency.} Fig.~\ref{fig:shared_mean_cond_calibration} provides an empirical check of \eqref{eq:calibration_condition}. At a given value of \(Y\), the smoothed mean classifier probability estimates the left-hand side of \eqref{eq:calibration_condition}, while the empirical class fraction among labelled observations estimates the right-hand side.
These closely agree for \(q_0^Y\) and \(q_0^{XY}\), whereas \(q_0^X\) shows a clear discrepancy, consistent with the posterior distortion in Fig.~\ref{fig:shared_mean_post}. 
Among observations with similar \(Y\), \(q_0^{XY}\) also exhibits greater variation than \(q_0^Y\), reflecting the additional label information carried by \(X\). (The small local spread of \(q_0^Y\) arises from smoothing over nearby values of \(Y\).) Correspondingly, the posterior based on \(q_0^{XY}\) is slightly more concentrated.

In the Supplement, we further illustrate that exact recovery of \(q_0^{XY}\) is unnecessary by considering perturbed probabilities \(q_i^{(\kappa)}\sim\operatorname{Dirichlet}(\kappa q_{0,i}^{XY})\), where smaller \(\kappa\) produces stronger perturbations. Since the perturbations are centered on \(q_0^{XY}\), their errors average out (conditionally on \(Y\)), and even strong perturbations leave the downstream posterior nearly unchanged.

Lastly, this example highlights that standard classification metrics need not reflect the desired properties of a classifier to be used for downstream posterior inference. For example, \(q_0^X\) has higher accuracy and macro one-vs-rest AUC than \(q_0^Y\) (\(0.664\) versus \(0.550\), and \(0.838\) versus \(0.732\)), yet \(q_0^X\) yields a shifted posterior while \(q_0^Y\) recovers the complete-label target.  
Generally, conditional calibration (i.e., correct average class probabilities at fixed \(Y\)) preserves the complete-label \emph{target}, while individual predictive power (e.g., from auxiliary information that resolves label uncertainty at fixed \(Y\)) can improve posterior \emph{efficiency}.

\begin{figure}
    \centering    \includegraphics[width=0.9\linewidth]{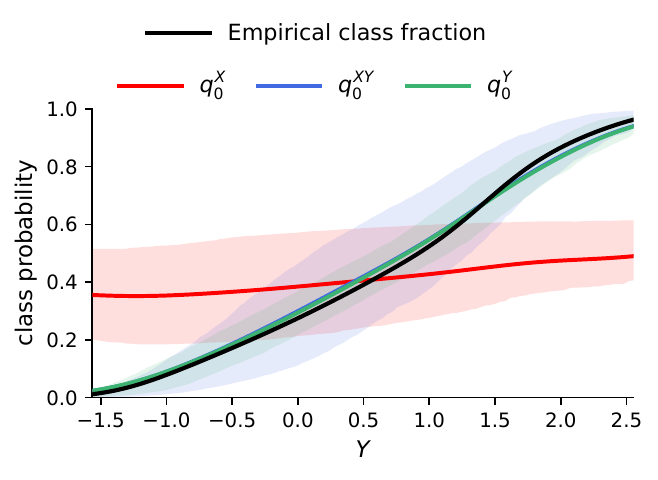}
    \caption{\it Conditional-calibration diagnostic for class 2 in the numerical illustration. The black curve shows the empirical class fraction among labelled observations; coloured curves and bands show the local means and 25th--75th percentile ranges of \(q_0^X\), \(q_0^{XY}\), and \(q_0^Y\). The empirical class fraction is estimated from the labelled observations, while the classifier means and percentile ranges are computed from all observations using Gaussian-kernel smoothing in \(Y\). Agreement of the means with the empirical fraction indicates conditional calibration with respect to \(Y\). \label{fig:shared_mean_cond_calibration}}
\end{figure}

\section{SN Ia case study}

\begin{figure*}[t!]\centering
    \includegraphics[width=0.9\linewidth, trim=1.24cm 0 0 0]{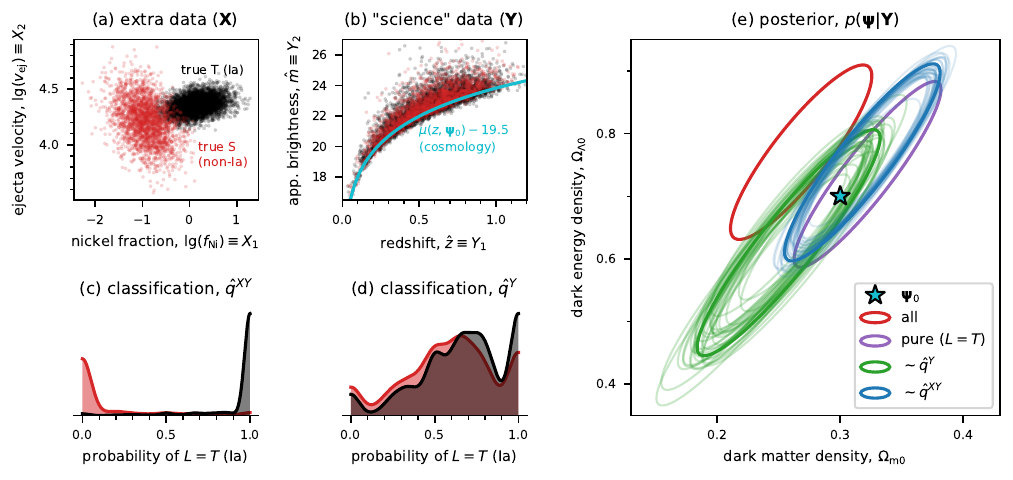}%
    \caption{\it SN Ia cosmology in the presence of contamination. (a) and (b): the data, where classification is informed mainly by \(\pmb X\). This is reflected in the distribution of classifier probabilities derived with \(\hat{q}^{XY}\) and \(\hat{q}^{Y}\), shown respectively in (c) and (d). Finally, (e): posteriors (\(2\sigma\), i.e.\@ \(86\,\%\) credible regions of the Gaussian approximation to the MCMC posterior) derived from the full data, 
    the pure trusted subset (using the true labels (\(L=T\)) unavailable in practice),  
    and our proposed procedure using \(\hat{q}^{Y}\) and \(\hat{q}^{XY}\), as labelled. Note that both of the latter recover the input cosmology, but \(\hat{q}^{Y}\) introduces 
    subsampling noise 
    (the individual \(p_{\mathcal{T}}(\pmb\psi\mid\pmb Y^{(m)}_{\mathcal{T}})\) are shown as pale contours), whereas the result with \(\hat{q}^{XY}\) more closely resembles the posterior from pure data.\label{fig:sn}}
\end{figure*}

In this section, we present a cosmological analysis of mock SN data modelled on the Legacy Survey of Space and Time (LSST), currently under way at the Vera Rubin Observatory, combined with spectroscopic follow-up with 4MOST/TiDES \citep{Frohmaier_2025}, which will obtain robust typing (\(L\)) for \(\sim 1\,\%\) and accurate host redshifts (\(\hat{z}\)) for \(\sim 10\,\%\) of the observed transients.

\paragraph{Mock data.} Focusing in this work only on the deep-drilling field (DDF) and inspired by the predictions of \citet{Mitra_2025}, we generate \({\approx}5000\) (mock) SNe Ia, distributed according to the volumetric rate from PLAsTiCC \citep{Kessler_2019}. We simulate them using the Bayesian hierarchical model (BHM) of \citet{mandel2017type}, which we will later use for the analysis; it is hence well-specified and trusted. As data, in addition to the precisely observed redshift (\(\hat{z}=z\)), we consider the results of fitting the SALT light-curve model\footnote{For simplicity, we do not perform fits (or simulate light curves); instead, we adopt diagonal uncertainties of the sampling distributions as in \citet{boyd2026flowsn}, based on the latent parameters in the BHM (see \ref{sec:simple-bayesn}). As the uncertainties are a-priori unknown, they are technically part of the data, but since they will not be informative of classification, we omit them from the discussion.}, which gives three summary statistics per object, representing peak brightness (\(\hat{m}\)), stretch (\(\hat{x}\)), and colour (\(\hat{c}\)); that is \(\pmb Y_i = (\hat{z}_i, \hat{m}_i, \hat{x}_i, \hat{c}_i)\). As additional observables \(\pmb X_i\) that help with classification but are not used in the cosmological analysis, we take the two explosion parameters (derived from photometric light curves) discussed by \citet{Uzsoy_2026}; we simulate them from the distribution(s) in the top-right panel of Fig.~8 therein. 

Various transient classes \citep[see e.g.][]{Kessler_2019,Frohmaier_2025} are expected to contaminate SN Ia surveys. Here, we will only include one suspect population matched to that of the SNe Ia except for a drift of brightness versus redshift, \(\Delta M = 0.2 \times z\), which can bias the inferred cosmology.\footnote{In reality, different contaminants will have different volumetric (redshift) rates, and their light curves fitted with SALT will lead to various data distributions. We adopt the present model to better illustrate our methodological arguments in this scientific context.} In practice, the true data-generating process of non-Ia SNe is not reliably known, motivating us to treat it as a suspect component, meaning that $K=2$ and, using for clarity $T\equiv1$ and $S\equiv2$, $\mathcal{T}=\{T\}$ and $\mathcal{S}=\{S\}$. We set \(\pi_S = 1/3\) contamination\footnote{This is in accordance with estimated non-Ia rates \citep{Frohmaier_2025}, but we note that the quality cuts typically applied to the fitted light curves can significantly reduce this fraction to about \(10\,\%\) even without explicit classification.} and construct a labelled training set of \({\approx}3000\) SNe Ia and \({\approx}1500\) contaminants, sampled from the true populations.

\paragraph{Classifier.} We train a simple 4-dimensional binned classifier in \(\pmb Y\)-space via  
Voronoi tessellation of a random subset of \(1000\) training examples.  
We set \(\hat q^Y\) to the true-Ia fraction among all training examples in
each cell, and analogously \(\hat q^{XY}\) in 6-dimensional
\((\pmb X,\pmb Y)\)-space. 
These classifiers are constructed to approximately satisfy the calibration condition \eqref{eq:calibration_condition}. 

\paragraph{Analysis.} Using each classifier, we generate \(M=30\) realizations of the class allocations and construct the corresponding trusted subsets \(\pmb Y^{(m)}_{\mathcal{T}}\). To analyze each of them, we employ the probabilistic model \(\prod_i p_T(\pmb Y_i\mid\pmb\psi,\pmb\gamma_T, L_i=T)\) elaborated fully in \ref{sec:simple-bayesn}, which was also used for the mock generation. It describes the SN Ia population using hyperparameters for the distributions of stretch, colour, and absolute brightnesses (\(M_i\)), as well as coefficients for the correlations between them. A cosmological model for the distance modulus (\(\mu\)) converts absolute to apparent brightnesses \(m_i = M_i + \mu(z_i, \pmb\psi)\), depending on redshift and the cosmological parameters \(\pmb\psi = (\Omega_{\rm m0}, \Omega_{\Lambda 0})\), the present-day relative dark-matter and dark-energy densities, which are the primary target of scientific interest. We use affine-invariant MCMC sampling of the global parameters \(\pmb\psi, \pmb\gamma_T\) via the \texttt{emcee} package \citep{emcee} and the marginal likelihood given by \citet{mandel2017type}, using 32 walkers and taking 1500 draws after 500 burn-in steps. 
We obtain a Monte Carlo sample from the approximate 
posterior distribution in  \eqref{eq:iykyk} by combining an equal number of draws from each MCMC run.

\paragraph{Results.} We present the results in Fig.~\ref{fig:sn}. As anticipated, we recover the input cosmology using both \(\hat{q}^{Y}\) and \(\hat{q}^{XY}\). Significant scatter is present among the individual posteriors \(p_{\mathcal{T}}(\pmb\psi\mid\pmb Y^{(m)}_{\mathcal{T}})\) in the former case, leading to a slightly less constraining (by \({\sim}20\,\%\)) final result. Using the extra data \(\pmb X\) to improve pointwise classification naturally moves the constraints towards those from an analysis of the pure subset. These results show that full Bayesian sampling of class probabilities is not needed for reliable and powerful cosmological inference from SN surveys and open the door to sophisticated SBI applications in which only the well-understood Ia population is modelled, while contaminants are probabilistically discarded at inference.

\section{Discussion}

The proposed modularization trades some potential efficiency for robustness to
localized misspecification. This trade-off is particularly attractive when
suspect populations are scientifically secondary and contribute relatively
little information about the parameters of interest, while their misspecification
can distort downstream inference. If suspect-component parameters are also of
interest, the protected posterior can subsequently be propagated into the
suspect module without reverse feedback, as in standard cut inference.

A practical implication is that classifier quality should be evaluated with the
downstream analysis in mind rather than by predictive performance alone. Our
derived calibration condition provides such a criterion, which can be assessed
when representative labelled data are available. Future work should develop
practical diagnostics for assessing calibration and the efficiency gains from
incorporating additional classification information, address covariate shift when
labelled data are non-representative \citep{autenrieth2024stratified}, and study
validation of probabilistic allocations when labels are unavailable. Our approach
also simplifies computation, which may facilitate its use across a range of
scientific applications and inference methodologies (e.g.\@, simulation-based inference).

    
\subsubsection*{Acknowledgements}
MA and KSM are supported by the European Union’s Horizon 2020 research and innovation programme under European Research Council Grant Agreement No. 101002652 (BayeSN; PI K. Mandel).   
DvD and MA acknowledge partial support from the UK Engineering and Physical Sciences Research Council [EP/W015080/1, EP/W522673/1].


\bibliography{bibliography}  

\newpage

\setcounter{section}{0}
\renewcommand\thesection{S\arabic{section}}

\setcounter{figure}{0}
\renewcommand\thefigure{S\arabic{figure}}

\setcounter{table}{0}
\renewcommand\thetable{S\arabic{table}}

\renewcommand{\theequation}{S.\arabic{equation}}
\renewcommand{\theHequation}{S\arabic{equation}}

\onecolumn

\begin{center}
{\Large\bfseries Supplementary Materials}
\end{center}
\vspace{1em}

\section{Additional theoretical results and proofs}

\subsection{Proof of Theorem~\ref{thm:conditional-calibration}}\label{sec:proof}

In this section, we present the detailed proof of Theorem~\ref{thm:conditional-calibration}

\begin{proof}[Proof of Theorem~\ref{thm:conditional-calibration}]
For each \(i\), let
\(L_i^\star\mid X_i,Y_i\sim q(\cdot\mid X_i,Y_i)\) independently, and
write \(\mathbb E_q[\cdot\mid\pmb D]\) for expectation over the imputed
labels \(\pmb L^\star\) conditional on \(\pmb D\).

By construction,
\(P(L_i^\star=k \mid X_i,Y_i)=q_k(X_i,Y_i)\). Hence, by  
iterated expectations 
and the assumed conditional calibration of \(q\) in \eqref{eq:calibration_condition},
\[
    P(L_i^\star=k\mid Y_i)
    =
    \mathbb E_0[q_k(X_i,Y_i)\mid Y_i]
    =
    P_0(L_i=k\mid Y_i).
\]
Thus \(L_i^\star\mid Y_i\) and \(L_i\mid Y_i\) have the same
distribution. 
Hence \((Y_i,L_i^\star)\overset d=(Y_i,L_i)\), 
and by independence across $i$,
\begin{align}
(\pmb Y,\pmb L^\star)\overset d=(\pmb Y,\pmb L).
    \label{eq:YL-distributional-identity}
\end{align}
To establish consistency of
\(\widetilde p_q(\pmb\psi,\pmb\gamma\mid\pmb D)\), let
\(\epsilon>0\) and define
\[
A_\epsilon
=
\left\{
(\pmb\psi,\pmb\gamma):
\left\|(\pmb\psi,\pmb\gamma)
-(\pmb\psi^\dagger,\pmb\gamma^\dagger)\right\|>\epsilon
\right\},
\qquad
R_n(\pmb Y,\pmb L)
=
p(A_\epsilon\mid\pmb Y,\pmb L).
\]
By consistency of the complete-label posterior,
\(R_n(\pmb Y,\pmb L)\xrightarrow{P_0}0\), and because \(0\leq R_n\leq1\), 
\[
    \mathbb E_0[R_n(\pmb Y,\pmb L)]\longrightarrow0.
\]

By \eqref{eq:YL-distributional-identity} and the definition of the
classifier-based posterior,
\[
\mathbb E_0\!\left[
    \widetilde p_q(A_\epsilon\mid\pmb D)
\right]
=
\mathbb E_0\!\left[
    \mathbb E_q\{R_n(\pmb Y,\pmb L^\star)\mid\pmb D\}
\right]
=
\mathbb E_0[R_n(\pmb Y,\pmb L)]
\longrightarrow 0.
\]
For any \(\eta>0\), Markov's inequality then gives
\[
P_0\!\left\{
    \widetilde p_q(A_\epsilon\mid\pmb D)>\eta
\right\}
\leq
\frac{
    \mathbb E_0[\widetilde p_q(A_\epsilon\mid\pmb D)]
}{\eta}
\longrightarrow0.
\]
Thus
\(\widetilde p_q(A_\epsilon\mid\pmb D)\xrightarrow{P_0}0\)
for every \(\epsilon>0\), proving consistency for
\((\pmb\psi^\dagger,\pmb\gamma^\dagger)\). Finally, \eqref{eq:thm1_expectation} follows directly from \eqref{eq:YL-distributional-identity} by iterated expectation, assuming integrability.
\end{proof}

We note that the i.i.d.\ assumption is made for simplicity; the same argument extends
to independent, non-identically distributed observations when the
calibration condition holds observation-wise.

\section{Additional details and results for Section~\ref{sec:toy_example}}\label{sec:supp_toy_example}

\paragraph{Complete-label misspecification target \(\mu^\dagger\).}
The complete-label posterior concentrates, under standard misspecified-Bayes
regularity conditions, at the parameter values for which the downstream model
is closest, in Kullback--Leibler divergence and averaged over the true class
proportions, to the true class-conditional generative distributions of \(Y\).
Equivalently,
\[
(\mu^\dagger,\pmb\gamma^\dagger)
=
\arg\min_{\mu,\pmb\gamma}
\sum_{k=1}^K
\pi_k\,
\mathrm{KL}\!\left[
P_0(Y\mid L=k)
\,\Vert\,
p_k(\,\cdot\mid\mu,\gamma_k)
\right].
\]

For the Gaussian downstream model in Section~\ref{sec:toy_example},
profiling over the component variances gives
\[
\sigma_k^{2\dagger}(\mu)
=
\operatorname{Var}_0(Y\mid L=k)
+
\{E_0(Y\mid L=k)-\mu\}^2.
\]
Substituting the generative distributions from
Section~\ref{sec:toy_example} into the resulting profiled
Kullback--Leibler objective and minimizing with respect to \(\mu\)
gives, 
\(
\mu^\dagger=0.1139
\), as illustrated by the vertical magenta dotted line in Fig.~\ref{fig:shared_mean_post}. 

\paragraph{Oracle classification probabilities.}
For the numerical illustration, we compute the oracle classifiers from the
true generative distributions. Writing \(f_k^X\) and \(f_k^Y\) for the true
class-conditional densities of \(X\) and \(Y\), respectively,
\[
q_{0,k}^X(X_i)
=
\frac{\pi_k f_k^X(X_i)}
{\sum_{l=1}^K \pi_l f_l^X(X_i)},
\qquad
q_{0,k}^Y(Y_i)
=
\frac{\pi_k f_k^Y(Y_i)}
{\sum_{l=1}^K \pi_l f_l^Y(Y_i)}.
\]
In the simulation, \(X\) and \(Y\) are conditionally independent given \(L\),
so that
\[
q_{0,k}^{XY}(X_i,Y_i)
=
\frac{\pi_k f_k^X(X_i)f_k^Y(Y_i)}
{\sum_{l=1}^K \pi_l f_l^X(X_i)f_l^Y(Y_i)}.
\]
Here, \(f_k^X\) is the Gaussian distribution used to generate \(X\) in each
component. For \(Y\), \(f_k^Y\) is Gaussian for the trusted components and is
the true shifted Gaussian--exponential convolution for the suspect component. 
For observations whose labels are known, the corresponding classification probabilities are replaced by the appropriate one-hot vector in the downstream analyses.

\clearpage

\subsection{Additional illustrations for Section~\ref{sec:toy_example}.}

We provide additional results supporting the numerical illustration in
Section~\ref{sec:toy_example}, including the full class-wise
conditional-calibration diagnostic and further classifier comparisons.

Figure~\ref{fig:supp_cond_calibration} extends the conditional-calibration
diagnostic in Figure~\ref{fig:shared_mean_cond_calibration} to all three
classes. In each case, the mean probabilities under \(q_0^Y\) and
\(q_0^{XY}\) closely follow the empirical class fraction, whereas \(q_0^X\)
does not. \(q_0^{XY}\) shows greater variation at fixed \(Y\), reflecting
the additional information carried by \(X\). 

\begin{figure*}[h]\centering
    \includegraphics[width=0.99\textwidth]{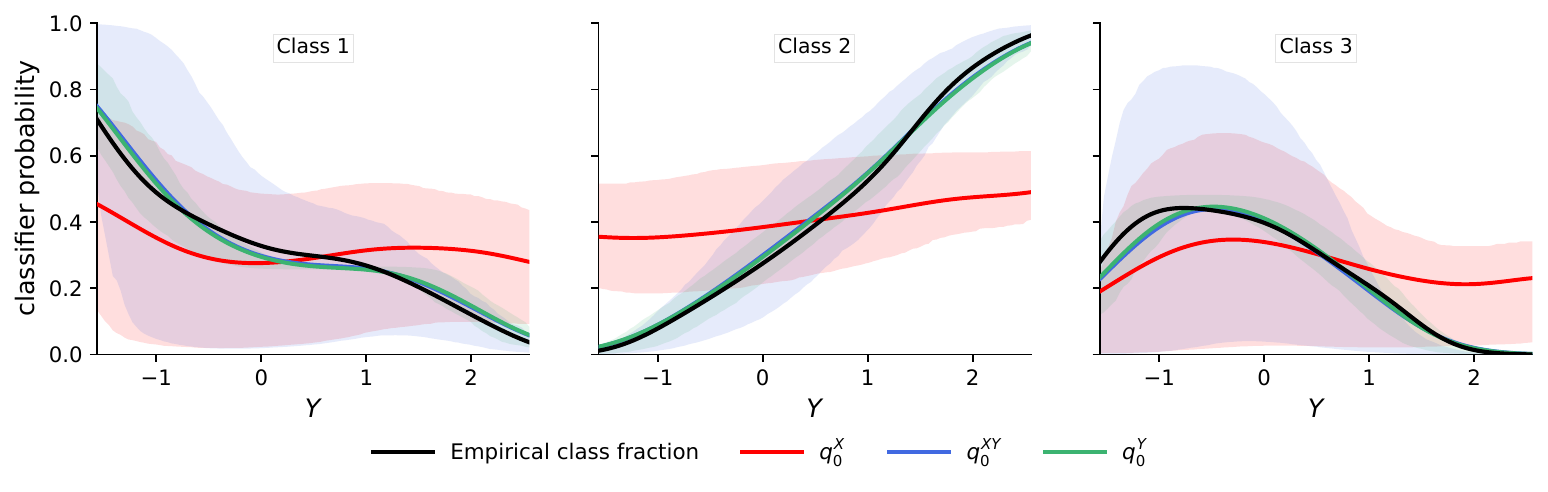}
    \caption{Conditional-calibration diagnostic for all three classes in the numerical illustration. Black curves show the empirical class fractions among labelled observations; coloured curves and shaded regions show the local means and 25th--75th percentile ranges of \(q_0^X\), \(q_0^{XY}\), and \(q_0^Y\). The class-2 panel is shown in Figure~\ref{fig:shared_mean_cond_calibration} in the main text.
\label{fig:supp_cond_calibration}}
\end{figure*}

Figure~\ref{fig:noisy_vs_qxy} illustrates the Dirichlet perturbations of
\(q_0^{XY}\) described in Section~\ref{sec:toy_example}. The perturbations can
substantially alter individual class probabilities, particularly for
\(\kappa=5\), while remaining centered on the oracle probabilities; increasing
\(\kappa\) reduces this pointwise variation. 
Nevertheless, Figure~\ref{fig:noisy_vs_oracle_mu} shows that these perturbations only mildly affect inference on \(\mu\). This illustrates Theorem~\ref{thm:conditional-calibration}, which shows that pointwise recovery of the oracle class probabilities is not required provided classification errors average out conditional on \(Y\), as in \eqref{eq:calibration_condition}.

\begin{figure*}\centering
    \includegraphics[width=0.99\textwidth]{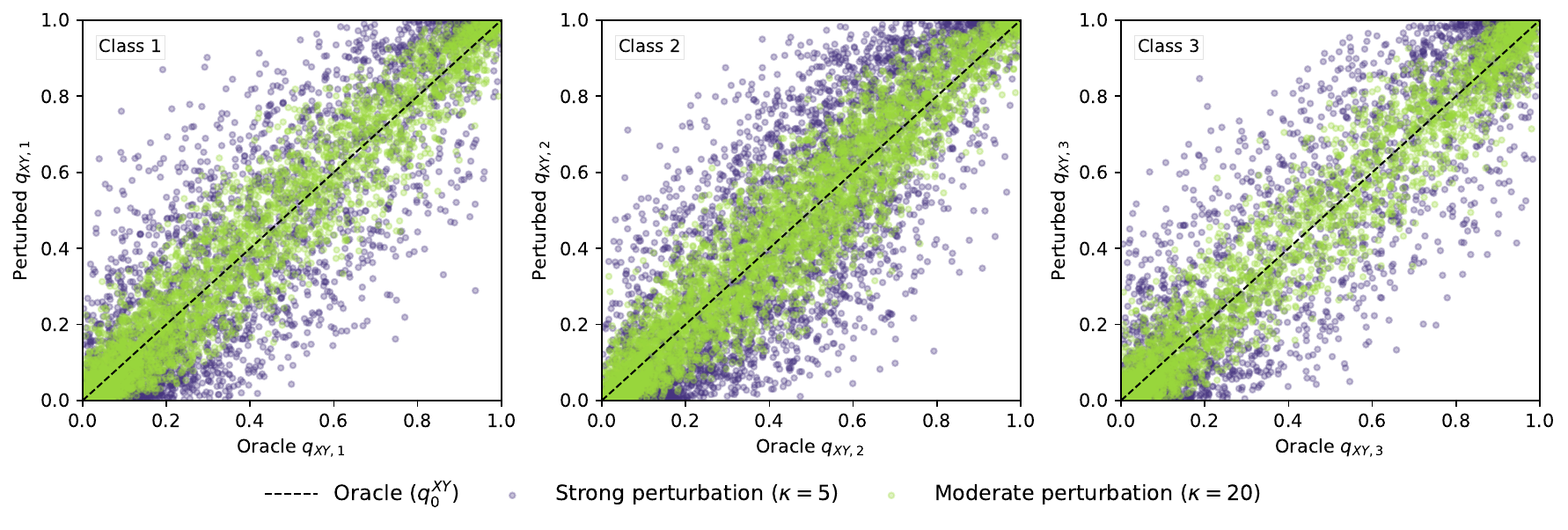}
    \caption{Dirichlet perturbations of the oracle classifier \(q_0^{XY}\) for
each class in the numerical illustration. Perturbed class probabilities are
plotted against their oracle values; the dashed line denotes equality.  \label{fig:noisy_vs_qxy}}
\end{figure*}

\begin{figure*}\centering
\includegraphics[width=0.5\textwidth]{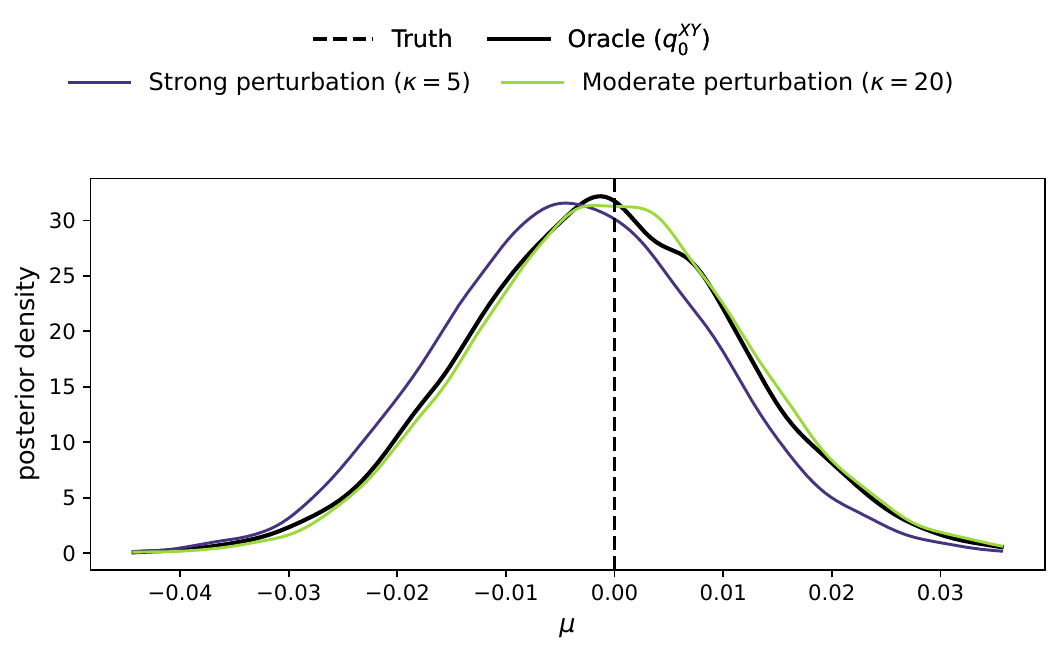}
    \caption{Posterior densities for \(\mu\) under the oracle \(q_0^{XY}\) and
its Dirichlet perturbations. Despite substantial pointwise perturbations of
the class probabilities, the resulting posteriors change only mildly. The
dashed line marks the true \(\mu_0\).  \label{fig:noisy_vs_oracle_mu}}
\end{figure*}

Finally, Figure~\ref{fig:AUC} further illustrates that standard
classification performance need not reflect downstream inferential behaviour.
Although \(q_0^X\) has a higher macro one-vs-rest AUC than \(q_0^Y\)
(\(0.838\) versus \(0.732\)), it leads to the substantially shifted posterior
for \(\mu\) in Section~\ref{sec:toy_example}, whereas \(q_0^Y\) satisfies
\eqref{eq:calibration_condition} and recovers the complete-label target.
Conversely, the perturbed \(q_0^{XY}\) classifiers have lower AUC than the
oracle classifier while producing very similar downstream posteriors. 

\begin{figure*}\centering
    \includegraphics[width=0.99\textwidth]{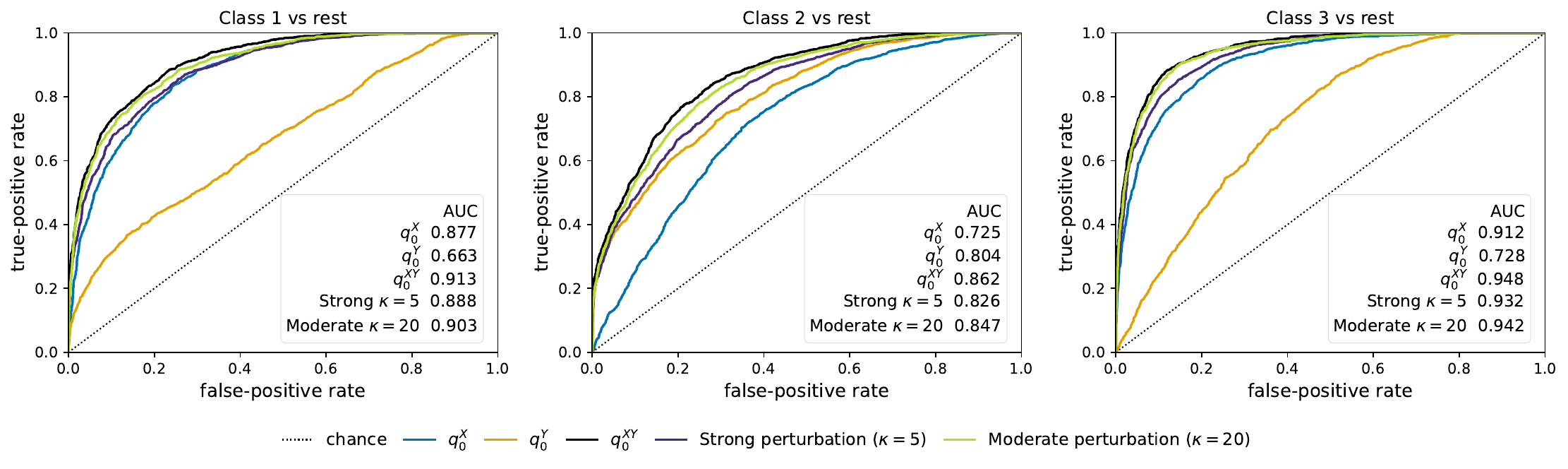}
    \caption{One-vs-rest ROC curves for the oracle and perturbed classifiers in the numerical illustration. AUC values are reported within each panel.  \label{fig:AUC}}
\end{figure*}

\clearpage

\section{Simple Bayesian hierarchical model for SN Ia cosmology}\label{sec:simple-bayesn}

Even though both the seminal cosmological analyses of type Ia supernovae \citep{riess1998observational,perlmutter1999measurements} and the current flagship efforts \citep{Brout_2022a,Popovic_2025a} rely on \(\chi^2\) fitting to infer cosmology, it has been widely recognised \citep[e.g.][]{march2011improved} that this can lead to biases due to the hierarchical structure of the model, in which observational uncertainties are intertwined with population scatter and poorly constrained physical effects (e.g.\@, dust extinction). Bayesian hierarchical modelling represents the principled approach to the problem, but it is yet to be applied to a full-scale cosmological analysis, starting from photometric light curves. A simpler approach is to model analytically distributions of a small number of summary statistics, e.g. as derived by fitting a parametrised light-curve template.

One such model, whose centrepiece is the physically motivated treatment of dust extinction, was developed by \citet{mandel2017type} and can be summarised (in the forward factorisation) as follows. For each SN Ia, labelled by \(i\), first sample latent intrinsic stretch, (\(B-V\))colour, and absolute brightness from a correlated multivariate normal distribution:
\begin{align}
    x_{\rm int}^i & \sim \mathcal{N}(\mu_x, \sigma_x^2), \\
    c_{\rm int}^i & \sim \mathcal{N}(\mu_c + \alpha_c x_{\rm int}^i, \sigma_c^2), \\
    M_{\rm int}^i & \sim \mathcal{N}(M_0 + \alpha x_{\rm int}^i + \beta c_{\rm int}^i, \sigma_0^2),
\intertext{where \(\mu_x, \sigma_x, \mu_c, \sigma_c, M_0, \sigma_0, \alpha, \alpha_c, \beta\) are global parameters. Then sample dust parameters; in this work, we assume a single shared total-to-selective extinction \(R_B\) and colour excess sampled from an exponential prior with rate \(1/\tau\) (a global parameter):}
    E_{B-V}^i & \sim\mathcal{E}(1/\tau);
\intertext{and apply (deterministically) extinction to the brightness and attenuation to the colour parameters, leaving the stretch unchanged:}
    x_{\rm ext}^i & = x_{\rm int}^i, \\
    c_{\rm ext}^i & = c_{\rm int}^i + E_{B-V}, \\
    M_{\rm ext}^i & = M_{\rm int}^i + E_{B-V} R_B.
\intertext{Finally, convert absolute to (noiseless) apparent brightness via the (cosmological) distance modulus:}
    m^i & = M_{\rm ext} + \mu(z^i, \pmb\psi),
\end{align}
where \(z^i\) is the SN's cosmological redshift, and \(\pmb\psi\) are the parameters of the cosmological model, for which we choose the standard (non-flat) \(\Lambda\)--cold dark matter with parameters \(\Omega_{\Lambda0}, \Omega_{\rm m0}\) describing the relative densities of its two cosmic components. Additionally, the Hubble constant (\(H_0\)) defines the absolute distance scale, but since it is fully degenerate with the absolute brightness scale (\(M_0\)), we choose the arbitrary value \(H_0 = 100\,\mathrm{km}\,\mathrm{s}^{-1}\,\mathrm{Mpc}^{-1}\).

To complete the model, we need to specify a sampling distribution that connects the latent parameters to observables. We assume that fit uncertainties follow the distributions presented in \citet{boyd2026flowsn}, derived from template fits to realistic light curves:
\begin{align}
    \ln \Sigma_x^i & \sim \mathcal{N}(-3, 1^2), \\
    \ln \Sigma_c^i & \sim \mathcal{N}(-7, 0.6^2), \\
    \ln \Sigma_m^i & \sim \mathcal{N}(0.2(m^i-56), 1.2^2).
\end{align}
Finally, we use those to draw noisy summary statistics \(\hat{x}^i \sim \mathcal{N}(x_{\rm ext}^i, \Sigma_x^i)\), \(\hat{c}^i \sim \mathcal{N}(c_{\rm ext}^i, \Sigma_c^i)\), \(\hat{m}^i \sim \mathcal{N}(m^i, \Sigma_m^i)\), regarded as observed, together with the redshift, for which we assume infinite precision: \(\hat{z}^i = z\).

In this work, for simplicity, we fix the population distributions to \(\mu_x = 0\), \(\sigma_x = 1\), \(\mu_c=-0.08\), \(\sigma_c=0.03\), \(\alpha_c=0\), \(\tau=0.15\), and \(R_B=4.1\), assigning broad uniform priors to the remaining \(\Omega_{\rm m0}, \Omega_{\Lambda 0}, M_0, \sigma_0, \alpha, \beta\), which we sample using MCMC and the marginal likelihood presented in \citet{mandel2017type}.


\end{document}